\pgfplotsset{compat=1.18}
\theoremstyle{plain}
\newtheorem{thm}{Theorem}
\newtheorem{lem}[thm]{Lemma}
\theoremstyle{definition}
\newtheorem{defn}[thm]{Definition}
\theoremstyle{remark}
\newtheorem*{rem}{Remark}
\acrodef{asd}[ASD]{anomalous sound detection}
\acrodef{gmm}[GMM]{Gaussian mixture model}
\acrodef{dg}[DG]{domain generalization}
\acrodef{auc}[AUC]{area under the \ac{roc} curve}
\acrodef{pca}[PCA]{principal component analysis}
\acrodef{lda}[LDA]{linear discriminant analysis}
\acrodef{lime}[LIME]{local interpretable model-agnostic explanations}
\acrodef{slime}[SLIME]{sound \ac{lime}}
\acrodef{dnn}[DNN]{deep neural network}
\acrodef{cd}[CD]{cosine distance}
\acrodef{ln}[LN]{length normalization}
\acrodef{rise}[RISE]{randomized input sampling for explanation}
\acrodef{roc}[ROC]{receiver operating characteristic}
\acrodef{tsne}[t-SNE]{t-distributed stochastic neighbor embedding}
\acrodef{xai}[xAI]{explainable artificial intelligence}
\acrodef{pauc}[pAUC]{partial area under the \ac{roc} curve}
\acrodef{umap}[UMAP]{uniform manifold approximation and projection}
\acrodef{oe}[OE]{outlier exposure}
\acrodef{im}[IM]{inlier modeling}
\acrodef{ic}[IC]{intra-class}
\acrodef{cce}[CXE]{categorical cross-entropy}
\acrodef{ssl}[SSL]{self-supervised learning}
\newcommand{\BE}{\begin{equation*}\begin{aligned}}
\newcommand{\EE}{\end{aligned}\end{equation*}}
\DeclareMathOperator{\Span}{span}
\title{AdaProj: Adaptively Scaled Angular Margin Subspace Projections for Anomalous Sound Detection with Auxiliary Classification Tasks}
\name{Kevin Wilkinghoff}
\address{Fraunhofer FKIE, Fraunhoferstr. 20 53353 Wachtberg, Germany\\
kevin.wilkinghoff@ieee.org}
\begin{document}
\ninept
\maketitle
\begin{sloppy}
\begin{abstract}
The state-of-the-art approach for semi-supervised anomalous sound detection is to first learn an embedding space by using auxiliary classification tasks based on meta information or self-supervised learning and then estimate the distribution of normal data.
In this work, AdaProj a novel loss function for training the embedding model is presented.
In contrast to commonly used angular margin losses, which project data of each class as close as possible to their corresponding class centers, AdaProj learns to project data onto class-specific subspaces while still ensuring an angular margin between classes.
By doing so, the resulting distributions of the embeddings belonging to normal data are not required to be as restrictive as other loss functions allowing a more detailed view on the data.
In experiments conducted on the DCASE2022 and DCASE2023 anomalous sound detection datasets, it is shown that using AdaProj to learn an embedding space significantly outperforms other commonly used loss functions.
\end{abstract}
\begin{keywords}
machine listening, anomaly detection, representation learning, domain generalization
\end{keywords}
\section{Introduction}
Semi-supervised anomaly detection is the task of training a system to differentiate between normal and anomalous data using only normal training samples \cite{aggarwal2017outlier}.
An example application is acoustic machine condition monitoring for predictive maintenance \cite{dohi2022description,dohi2023description}.
Here, normal data corresponds to sounds of fully functioning machines whereas anomalous sounds indicate mechanical failure.
One of the main difficulties to overcome in acoustic machine condition monitoring is that it is practically impossible to record isolated sounds of a target machine.
Instead, recordings also contain many other sounds emitted by non-target machines or other sound sources such as humans.
Compared to this complex acoustic scene, anomalous signal components of the target machines are very subtle and hard to detect without utilizing additional knowledge.
Another main difficulty is that a system should also be able to reliably detect anomalous sounds when changing the acoustic conditions or machine settings without needing to collect large amounts of data in these changed conditions or to re-train the system (domain generalization \cite{wang2021generalizing}).
One possibility to overcome both difficulties is to learn a mapping of the audio signals into a fixed-dimensional vector space, in which representations belonging to normal and anomalous data, called embeddings, can be easily separated.
Then, by estimating the distribution of normal training samples in the embedding space, one can compute an anomaly score for a test sample to distinguish between normal and anomalous samples.
\par
To train such an embedding model, the state-of-the-art is to utilize an auxiliary classification task using provided meta information or \acl{ssl}.
This enables the embedding model to closely monitor target signals and ignore other signals and noise \cite{wilkinghoff2024why}.
For machine condition monitoring, possible auxiliary tasks are classifying between machine types \cite{giri2020self,lopez2020speaker,inoue2020detection} or, additionally, between different machine states and noise settings \cite{venkatesh2022improved,nishida2022anomalous,deng2022ensemble}, recognizing augmented and non-augmented versions of normal data \cite{giri2020self,chen2023effective} or predicting the activity of machines \cite{nishida2022anomalous}.
Using an auxiliary task to learn embeddings is also called \ac{oe} \cite{hendrycks2019deep} because normal samples belonging to other classes than a target class can be considered proxy outliers \cite{primus2020anomalous}.
\par
The contributions of this work are the following.
First and foremost, AdaProj, a novel angular margin loss function that learns class-specific subspaces for training an embedding model, is presented\footnote{An open-source implementation of the AdaProj loss is available at: \url{https://github.com/wilkinghoff/AdaProj}}.
Second, it is proven that AdaProj has arbitrary large optimal solution spaces allowing to relax the compactness requirements of the class-specific distributions in the embedding space.
Last but not least, AdaProj is compared to other commonly used loss functions.
In experiments conducted on the DCASE2022 and DCASE2023 \ac{asd} datasets it is shown that AdaProj outperforms other commonly used loss functions.
%As a result, the proposed system outperforms all systems submitted to the  DCASE2023 Challenge.

\subsection{Related Work}
When training a neural network to solve a classification task, usually the softmax function in combination with the \ac{cce} is used.
However, this only reduces inter-class similarity without explicitly increasing intra-class similarity \cite{wang2018additive}.
When training an embedding model for anomaly detection, high intra-class similarity is a desired property to cluster normal data and be able to detect anomalous samples.
There are several loss functions that explicitly increase intra-class similarity:
\cite{ruff2018deep} proposed a compactness loss to project the data into a hypersphere of minimal volume for one-class classification.
However, for machine condition monitoring in noisy conditions it is known that one-class losses perform worse than losses that also discriminatively solve an auxiliary classification task \cite{wilkinghoff2024why}.
\cite{perera2019learning} utilized an additional descriptiveness loss consisting of a \ac{cce} imposing a classification task on another arbitrary dataset than the target dataset to regularize the training objective.
For machine condition monitoring, often meta information is available as it can at least be ensured which machine is being recorded when collecting data.
\cite{inoue2020detection} used center loss \cite{wen2016discriminative}, which minimizes the distance to learned class centers for each class.
Another choice are angular margin losses that learn an embedding space on the unit sphere while ensuring a margin between classes, which improves the generalization capabilities.
Specific examples are the additive margin softmax loss \cite{wang2018additive} as used by \cite{lopez2020speaker,lopez2021ensemble} and ArcFace \cite{deng2019arcface} as used by \cite{giri2020self,kuroyanagi2021ensemble,deng2022ensemble}. \cite{wilkinghoff2021combining,wilkinghoff2023design} use the AdaCos loss \cite{zhang2019adacos}, which essentially is ArcFace with an adaptive scale parameter, or the sub-cluster AdaCos loss \cite{wilkinghoff2021sub}, which utilizes multiple sub-clusters for each class instead of a single one.
\par
As stated before, the goal of this work is to reduce the restrictions on the learned distributions in the embedding space by learning class-specific linear subspaces.
There are also other works on losses aiming at learning subspaces based on orthogonal projections in an embedding space.
\cite{yu2021autoencoder} used orthogonal projections as a constraint for training an autoencoder based anomaly detection system.
Another example is semi-supervised image classification by using a combination of class-specific subspace projections with a reconstructions loss and ensure that they are different by also using a discriminative loss \cite{li2022learnable}.
Our work focuses on learning an embedding space through an auxiliary classification task that is well-suited for semi-supervised anomaly detection.

\section{Methodology}
\subsection{Notation}
Let $\phi:X\rightarrow\mathbb{R}^D$ denote a neural network where $X$ denotes some input space, which consists of audio signals in this work, and $D\in\mathbb{N}$ denotes the dimension of the embedding space.
Define the linear projection of $x\in\mathbb{R}^D$ onto the subspace $\Span(\mathcal{C}_k)\subset\mathbb{R}^D$ as ${P_{\Span(\mathcal{C}_k)}(x):=\sum_{c_k\in \mathcal{C}_k}\langle x,c_k\rangle c_k}$.
Furthermore, let ${\mathcal{S}^{D-1}=\lbrace y\in\mathbb{R}^D:\lVert y\rVert_2=1\rbrace\subset\mathbb{R}^D}$ denote the $D$-sphere and define ${P_{\mathcal{S}^{D-1}}(x):=\frac{x}{\lVert x\rVert_2}\in \mathcal{S}^{D-1}}$ to be the projection onto the $D$-sphere.

\subsection{AdaProj loss function}
Similar to the sub-cluster AdaCos loss \cite{wilkinghoff2021sub}, the idea of the AdaProj loss is to enlarge the space of optimal solutions to allow the network to learn less restrictive distributions of normal data.
This relaxation is achieved by measuring the distance to class-specific subspaces while training the embedding model instead of measuring the distance to a single or multiple centers as done for other angular margin losses and may help to differentiate between normal and anomalous embeddings after training.
The reason is that for some auxiliary classes a strong compactness may be detrimental when aiming to learn an embedding space that separates normal and anomalous data since both may be distributed very similarly.
\par
Formally, the definition of the AdaProj loss is as follows.
\begin{defn}[AdaProj loss]
Let $\mathcal{C}_k\subset\mathbb{R}^D$ with $\lvert\mathcal{C}_k\rvert=J\in\mathbb{N}$ denote class centers for class $k\in\lbrace1,...,N_\text{classes}\rbrace$.
Then for the AdaProj loss the logit for class $k\in\lbrace1,...,N_\text{classes}\rbrace$ is defined as
\BE L(x,\mathcal{C}_k):=\hat{s}\cdot\lVert P_{\mathcal{S}^{D-1}}(x)-P_{\mathcal{S}^{D-1}}(P_{{\Span(\mathcal{C}_k)}}(x))\rVert_2^2\EE
where $\hat{s}\in\mathbb{R}_+$ is the adaptive scale parameter of the AdaCos loss \cite{zhang2019adacos}.
Inserting these logits into a softmax function and computing the \ac{cce} yields the AdaProj loss function.
\end{defn}
\begin{rem}
Note that, by Lemma 5 of \cite{wilkinghoff2024why}, it holds that
\BE &\lVert P_{\mathcal{S}^{D-1}}(x)-P_{\mathcal{S}^{D-1}}(P_{{\Span(\mathcal{C}_k)}}(x))\rVert_2^2\\=&2(1-\langle P_{\mathcal{S}^{D-1}}(x),P_{\mathcal{S}^{D-1}}(P_{{\Span(\mathcal{C}_k)}}(x))\rangle),\EE
which is equal to the cosine distance in this case and explains why the AdaProj loss can be called an angular margin loss.
\end{rem}
As for other angular margin losses, projecting the embedding space onto the $D$-sphere has several advantages \cite{wilkinghoff2024why}.
Most importantly, if $D$ is sufficiently large randomly initialized centers are with very high probability approximately orthonormal to each other \cite{gorban2016approximation}, i.e. distributed equidistantly, and sufficiently far away from $\pmb{0}\in\mathbb{R}^D$.
Therefore, one does not need to carefully design a method to initialize the centers.
Another advantage is that a normalization may prevent numerical issues, similar to applying batch normalization \cite{ioffe2015batch}.
\par
The following Lemma shows that using the AdaProj loss, as defined above, indeed increases the solution space.
\begin{lem}
Let $x\in\mathbb{R}^D$ and let $\mathcal{C}\subset\mathbb{R}^D$ contain pairwise orthonormal elements. If $x\in\Span(\mathcal{C})\cap \mathcal{S}^{D-1}$, then
\BE\lVert P_{\mathcal{S}^{D-1}}(x)-P_{\mathcal{S}^{D-1}}(P_{\Span(\mathcal{C})}(x))\rVert_2^2=0.\EE
\end{lem}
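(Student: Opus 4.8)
The plan is to show that, under the stated hypotheses, both projections appearing in the expression act as the identity on $x$, so that the difference is literally the zero vector. The argument splits into two short observations, and the only point that deserves a moment's care is checking that the explicit summation formula for $P_{\Span(\mathcal{C})}$ fixed in the notation section really coincides with the genuine orthogonal projection when $\mathcal{C}$ is orthonormal.

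First I would establish that $P_{\Span(\mathcal{C})}(x)=x$. Since $x\in\Span(\mathcal{C})$, write $x=\sum_{c\in\mathcal{C}}\alpha_c c$ with coefficients $\alpha_c\in\mathbb{R}$. Taking the inner product with a fixed $c'\in\mathcal{C}$ and using that the elements of $\mathcal{C}$ are pairwise orthonormal, so that $\langle c,c'\rangle=\delta_{c,c'}$, yields $\langle x,c'\rangle=\alpha_{c'}$. Substituting back gives $x=\sum_{c\in\mathcal{C}}\langle x,c\rangle c$, which is precisely $P_{\Span(\mathcal{C})}(x)$ by definition. Next, because $x\in\mathcal{S}^{D-1}$ we have $\lVert x\rVert_2=1$, in particular $x\neq\pmb{0}$, so $P_{\mathcal{S}^{D-1}}(x)=x/\lVert x\rVert_2=x$ is well defined and equals $x$. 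Combining the two steps, $P_{\mathcal{S}^{D-1}}(P_{\Span(\mathcal{C})}(x))=P_{\mathcal{S}^{D-1}}(x)=x$ as well, hence $P_{\mathcal{S}^{D-1}}(x)-P_{\mathcal{S}^{D-1}}(P_{\Span(\mathcal{C})}(x))=x-x=\pmb{0}$, and its squared Euclidean norm is $0$, as claimed.

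There is no real obstacle here; the statement is essentially a sanity check confirming that points already lying in $\Span(\mathcal{C})\cap\mathcal{S}^{D-1}$ incur zero loss, which is exactly the sense in which the optimal solution space is enlarged from a single center to a whole subspace. The one place to be slightly careful is the identification of the given summation with the orthogonal projection: this is valid only because $\mathcal{C}$ is orthonormal, since otherwise the map $x\mapsto\sum_{c\in\mathcal{C}}\langle x,c\rangle c$ need not be idempotent and the conclusion could fail. As an alternative to the direct computation one could instead invoke the Remark above (Lemma 5 of \cite{wilkinghoff2024why}) to rewrite the quantity as $2\bigl(1-\langle P_{\mathcal{S}^{D-1}}(x),P_{\mathcal{S}^{D-1}}(P_{\Span(\mathcal{C})}(x))\rangle\bigr)=2(1-\langle x,x\rangle)=0$, but the elementary argument above is shorter and self-contained.
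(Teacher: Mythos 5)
Your proof is correct and follows essentially the same route as the paper's: both use pairwise orthonormality to show $P_{\Span(\mathcal{C})}(x)=\sum_{c\in\mathcal{C}}\langle x,c\rangle c=x$, after which the two normalized vectors coincide and the difference vanishes. Your version is in fact slightly more careful, since you explicitly note that $\lVert x\rVert_2=1$ makes $P_{\mathcal{S}^{D-1}}(x)=x$ well defined, a step the paper leaves implicit.
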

\begin{proof}
    Let $x\in\Span(\mathcal{C})\cap \mathcal{S}^{D-1}\subset\mathbb{R}^D$ with $\lvert \mathcal{C}\rvert=J$. Therefore, $\lVert x\rVert_2=1$ and there are $\lambda_j\in\mathbb{R}$ with $x=\sum_{j=1}^J\lambda_jc_j$.
    Thus, it holds that
    \BE x&=\sum_{j=1}^J\lambda_jc_j=\sum_{j=1}^J\sum_{i=1}^J\lambda_i\langle c_i,c_j\rangle c_j=\sum_{j=1}^J\langle\sum_{i=1}^J\lambda_i c_i,c_j\rangle c_j\\&=\sum_{j=1}^J\langle x,c_j\rangle c_j=P_{\Span(\mathcal{C})}(x).\EE
    Hence, we obtain
    \BE &\lVert P_{\mathcal{S}^{D-1}}(x)-P_{\mathcal{S}^{D-1}}(P_{\Span(\mathcal{C})}(x))\rVert_2^2=&0.\EE
\end{proof}
\begin{rem}
    If $\mathcal{C}$ contains randomly initialized elements of the unit sphere and $D$ is sufficiently large, then the elements of $\mathcal{C}$ are approximately pairwise orthonormal with very high probability \cite{gorban2016approximation}.
    %Hence, this Lemma is likely to hold for the AdaProj loss.
\end{rem}
When inserting the projection onto the $D-1$-sphere as an operation into the neural network, this Lemma shows that the solution space for the AdaProj loss function is increased to the whole subspace $\Span(\mathcal{C})$, which has a dimension of $\lvert \mathcal{C}\rvert$ with very high probability.
Because of this, it should be ensured that $\lvert \mathcal{C}\rvert<D$.
Otherwise the whole embedding space may be an optimal solution and thus the network cannot learn a meaningful embedding space.
In comparison, for the AdaCos loss only the class centers themselves are optimal solutions and for the sub-cluster AdaCos loss each sub-cluster is an optimal solution \cite{wilkinghoff2024why}.

\section{Experimental results}
%To experimentally evaluate the proposed AdaProj loss, first the experimental setup is presented by describing the datasets and the \ac{asd} system. After that, experimental results regarding the performance obtained with different loss functions, the impact of the subspace dimension and the relation to state-of-the-art systems are presented and discussed.

\subsection{Datasets and performance metrics}
For the experiments, the DCASE2022 \ac{asd} dataset \cite{dohi2022description} and the DCASE2023 \ac{asd} dataset \cite{dohi2023description} for semi-supervised machine condition monitoring were used.
Both datasets consist of a development set and an evaluation set that are divided into a training split containing only normal data and a test split containing normal as well as anomalous data.
Furthermore, both tasks explicitly capture the problem of domain generalization \cite{wang2021generalizing} by defining a source and a target domain, which differs from the source domain by altering machine parameters or noise conditions.
The task is to detect anomalous samples regardless of the domain a sample belongs to by training a system with only normal data.
As meta information, the target machine type of each sample is known and for the training samples, also the domain and additional parameter settings or noise conditions, called attribute information, are known and thus can be utilized to train an embedding model.
\par
The DCASE2022 \ac{asd} dataset \cite{dohi2022description} consists of the machine types \enquote{ToyCar} and \enquote{ToyTrain} from ToyAdmos2 \cite{harada2021toyadmos2} and \enquote{fan}, \enquote{gearbox}, \enquote{bearing}, \enquote{slide rail} and \enquote{valve} from MIMII-DG \cite{dohi2022mimii_dg}.
For each machine type, there are six different sections corresponding to different domain shifts and also defining subsets used for computing the performance.
These sections are known for each recording and can also be utilized as meta information to train the system.
For the source domain of each section, there are $1000$ normal audio recordings with a duration of \SI{10}{\second} and a sampling rate of \SI{16}{\kilo\hertz} belonging to the training split.
For the target domain of each section, there are only $10$ normal audio recordings belonging to the training split.
The test splits of each section contain approximately $100$ normal and $100$ anomalous samples.
\par
The DCASE2023 \ac{asd} dataset \cite{dohi2023description} is similar to the DCASE2022 \ac{asd} dataset with the following modifications.
First of all, the development set and the evaluation set contain mutually exclusive machine types.
More concretely, the development set contains the same machine types as the DCASE2022 dataset and the evaluation set contains the machine types
\enquote{ToyTank}, \enquote{ToyNscale} and \enquote{ToyDrone} from ToyAdmos2+ \cite{harada2023toyadmos2+} and \enquote{vacuum}, \enquote{bandsaw}, \enquote{grinder} and \enquote{shaker} from MIMII-DG \cite{dohi2022mimii_dg}.
Furthermore, there is only a single section for each machine type, which makes the auxiliary classification task much easier resulting in less informative embeddings for the \ac{asd} task.
Last but not least, the duration of each recording has a length between \SI{6}{\second} and \SI{18}{\second}.
Overall, all three modifications make this task much more challenging than the DCASE2022 \ac{asd} task.
\par
To measure the performance of the \ac{asd} systems the threshold-independent \ac{auc} metric is used.
In addition, the \ac{pauc} \cite{mcclish1989analyzing}, which is the \ac{auc} for low false positive rates ranging from $0$ to $p$, with $p=0.1$, is used.
%The reason for incorporating the \ac{pauc} is that, for machine condition monitoring, one is interested in ensuring a low false alarm rate to not lose the trust of users in taking alarms seriously.
Both performance metrics are computed domain-independent for every previously defined section of the dataset and the harmonic mean of all resulting values is the final score used to measure and compare the performances of different \ac{asd} systems.

\subsection{Anomalous sound detection system}
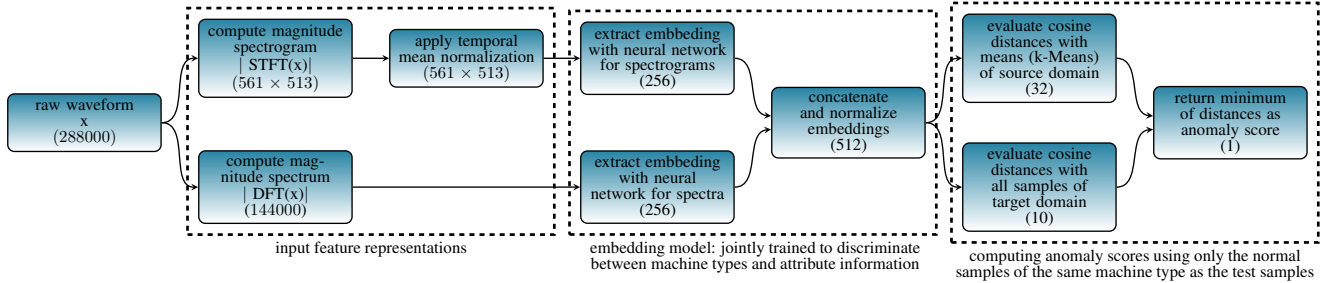
\begin{figure*}
	\centering
	%	\begin{adjustbox}{max width=1.75\textwidth,right=8.43cm}
	\begin{adjustbox}{max width=\textwidth}
		\usetikzlibrary{shapes.geometric, arrows, calc, fit}
\tikzset{box/.style={top color=teal!85!blue!85,shade,draw, rectangle, rounded corners, thick, node distance=1cm, minimum width=2.3cm, text centered, minimum height=2em, text width=3cm}}
\tikzset{container/.style={draw, rectangle, dashed, ultra thick, inner sep=0.7em,minimum width=2cm}}
\tikzset{container2/.style={,draw, rectangle, dashed, ultra thick, inner sep=0.7em,minimum width=2cm}}
\tikzset{font={\fontsize{10pt}{10}\selectfont}}
\tikzstyle{arrow} = [thick, ->, >=stealth]
\begin{tikzpicture}
\node(wav)[box,node distance=1.65cm]{raw waveform\\ x\\ $(288000)$};
\node(log_mel)[box,right of=wav, yshift=1.35cm, node distance=4cm]{compute magnitude spectrogram\\ $\lvert$ STFT(x)$\rvert$\\ $(561\times513)$};
\node(cmn)[box,right of=log_mel, node distance=4cm]{apply temporal mean normalization $(561\times513)$};
\node(fft)[box,right of=wav, yshift=-1.35cm, node distance=4cm]{compute magnitude spectrum \\ $\lvert$ DFT(x)$\rvert$\\ ($144000$)};
\node(nnmel)[box,right of=cmn, node distance=4cm]{extract embbeding with neural network for spectrograms\\(256)};
\node(nnfft)[box,right of=fft, node distance=8cm]{extract embbeding with neural network for spectra\\(256)};
\node(concat)[box,right of=nnfft, yshift=1.35cm, node distance=4cm]{concatenate and normalize embeddings\\ (512)};
\node(gmmconcat1)[box,right of=concat, node distance=4cm, yshift=-1.35cm]{evaluate cosine distances with all samples of target domain\\(10)};
\node(gmmconcat2)[box,right of=concat, node distance=4cm, yshift=1.35cm]{evaluate cosine distances with means (k-Means) of source domain\\(32)};
\node(llh)[box,right of=gmmconcat2, node distance=4cm, yshift=-1.35cm]{return minimum of distances as anomaly score\\ (1)};
\begin{scope}[on background layer]
\node[container2, fit=(nnfft)(nnmel)(concat)] (alltrain) {};
\node at (alltrain.south west) [text centered,below right,node distance=0 and 0, align=center,xshift=0.35cm] (alltraintxt) {embedding model: jointly trained to discriminate\\ between machine types and attribute information};
\node[container2, fit=(fft)(log_mel)(cmn)] (frontend) {};
\node at (frontend.south) [text centered,below,node distance=0 and 0, align=center,xshift=0cm] (frontendtxt) {input feature representations};
\node[container2, fit=(gmmconcat1)(gmmconcat2)(llh)] (backend) {};
\node at (backend.south) [text centered,below,node distance=0 and 0, align=center,xshift=0cm] (backendtxt) {computing anomaly scores using only the normal\\ samples of the same machine type as the test samples};
\end{scope}

\draw[arrow](wav.0) [out=0, in=180] to (log_mel.180);
\draw[arrow](log_mel.0) [out=0, in=180] to (cmn.180);
\draw[arrow](cmn.0) [out=0, in=180] to (nnmel.180);
\draw[arrow](wav.0) [out=0, in=180] to (fft.180);
\draw[arrow](fft.0) [out=0, in=180] to (nnfft.180);
\draw[arrow](nnfft.0) [out=0, in=185] to (concat.185);
\draw[arrow](nnmel.0) [out=0, in=175] to (concat.175);
\draw[arrow](concat.0) [out=0, in=180] to (gmmconcat1.180);
\draw[arrow](concat.0) [out=0, in=180] to (gmmconcat2.180);
\draw[arrow](gmmconcat1.0) [out=0, in=185] to (llh.185);
\draw[arrow](gmmconcat2.0) [out=0, in=175] to (llh.175);

\end{tikzpicture}
	\end{adjustbox}
	\caption{Structure of the \ac{asd} system, adapted from Figure 1 in \cite{wilkinghoff2023design}. Representation size in each step is given in brackets.}
	\label{fig:system_structure}
	%	\vspace{-1mm}
\end{figure*}
For all experiments conducted in this work, the state-of-the-art \ac{asd}  system presented in \cite{wilkinghoff2023design} is used.
An overview of the system can be found in \autoref{fig:system_structure}.
The system consists of three main components: 1) a feature extractor, 2) an embedding model and 3) a backend for computing anomaly scores.
\par
In the first processing block, two different feature representations are extracted from the raw waveforms, namely magnitude spectrograms and the magnitude spectrum.
To capture less similar information with both feature representations, the temporal mean is subtracted from the spectrograms, essentially removing static frequency information that are captured with the highest possible resolution by the spectra.
\par
For each of the two feature representations, another convolutional sub-network is trained and the resulting embeddings are concatenated and normalized with respect to the Euclidean norm to obtain a single embedding.
In contrast to the original architecture, the embedding dimension is doubled from $256$ to $512$ to increase the likelihood of two randomly initialized center vectors to be orthogonal.
Note that  even if some of the randomly sampled class centers are not orthonal, the probability that they are linearly independent is equal to $1$ if $J<D$.
Thus, the subspaces spanned be the class centers do not collapse.
More details about the subnetwork architectures can be found in \cite{wilkinghoff2023design}.
The network is trained for $10$ epochs using a batch size of $64$ using adam \cite{kingma2015adam} by utilizing meta information such as machine types and the provided attribute information as an auxiliary classification task.
Different loss functions can be used for this purpose and will be compared in the next subsection.
All loss functions investigated in this work require class-specific center vectors, which are initialized randomly using Glorot uniform initialization \cite{glorot2010understanding}.
To improve the \ac{asd} performance, the class centers are not adapted during training and no bias terms are used as proposed in \cite{ruff2018deep} for one-class classification.
Furthermore, mixup \cite{zhang2017mixup} with a uniformly distributed mixing coefficient is applied to the waveforms.
\par
As a backend, k-means with $32$ means is applied to the normal training samples of the source domain.
For a given test sample, the smallest cosine distance to these means and the ten normal training samples of the target domain is used as an anomaly score.
Thus, smaller values indicate normal samples whereas higher values indicate anomalous samples.

\subsection{Performance evaluation}
\begin{table*}
	\centering
	\caption{\ac{asd} performance obtained with different loss functions. Harmonic means of all AUCs and pAUCs over all pre-defined sections of the dataset are depicted in percent. Arithmetic mean and standard deviation of the results over ten independent trials are shown. Best results in each column are highlighted with bold letters.}
\begin{adjustbox}{max width=\textwidth}
	\begin{tabular}{lll|ll|ll|ll|ll}
		\toprule
		\multirow{2}{*}{loss function}&\multicolumn{2}{l}{DCASE2022 dev. set \cite{dohi2022description}}&\multicolumn{2}{l}{DCASE2022 eval. set \cite{dohi2022description}}&\multicolumn{2}{l}{DCASE2023 dev. set \cite{dohi2023description}}&\multicolumn{2}{l}{DCASE2023 eval. set \cite{dohi2023description}}&\multicolumn{2}{l}{arithmetic mean}\\
		&AUC&pAUC&AUC&pAUC&AUC&pAUC&AUC&pAUC&AUC&pAUC\\
		\midrule
		\acs{ic} compactness loss \cite{ruff2018deep}&$79.2\pm0.9$&$64.7\pm1.1$&$70.3\pm0.8$&$58.9\pm0.8$&$67.7\pm1.2$&$56.9\pm0.9$&$64.0\pm1.5$&$55.8\pm0.9$&$70.3$&$59.1$\\
		\acs{ic} compactness loss + \acs{cce} \cite{perera2019learning}&$79.0\pm0.8$&$65.0\pm0.7$&$72.6\pm0.4$&$60.3\pm0.7$&$70.4\pm1.0$&$\pmb{57.4\pm1.1}$&$67.5\pm0.8$&$57.5\pm1.0$&$72.4$&$60.1$\\
		AdaCos loss \cite{zhang2019adacos}&$79.8\pm0.7$&$\pmb{65.5\pm0.9}$&$73.0\pm0.4$&$59.7\pm0.6$&$70.9\pm0.9$&$56.8\pm0.9$&$68.0\pm1.6$&$58.0\pm1.1$&$72.9$&$60.0$\\
		sub-cluster AdaCos loss \cite{wilkinghoff2021sub}&$80.0\pm1.4$&$65.2\pm1.1$&$72.9\pm0.6$&$59.5\pm0.5$&$70.4\pm0.9$&$56.3\pm0.8$&$66.5\pm1.6$&$56.2\pm1.0$&$72.5$&$59.3$\\
        proposed AdaProj loss&$\pmb{80.6\pm0.8}$&$\pmb{65.5\pm1.3}$&$\pmb{73.6\pm0.7}$&$\pmb{60.5\pm0.7}$&$\pmb{71.4\pm1.0}$&$56.2\pm0.7$&$\pmb{69.8\pm1.3}$&$\pmb{60.0\pm0.5}$&$\pmb{73.9}$&$\pmb{60.6}$\\
		\bottomrule
	\end{tabular}
\end{adjustbox}
\label{tab:performances}
\end{table*}
In the first experiment, the \ac{asd} performance obtained with the following loss functions was compared: 1) individual class-specific \ac{ic} compactness losses jointly trained on all classes \cite{ruff2018deep} 2) an additional discriminative \ac{cce} loss, similar to the descriptiveness loss used in \cite{perera2019learning} but trained on the same dataset, 3) the AdaCos loss \cite{zhang2019adacos}, 4) the sub-cluster AdaCos loss \cite{wilkinghoff2021sub} with $32$ sub-clusters and 5) the proposed AdaProj loss.
Each experiment was repeated ten times to reduce the variance of the resulting performances.
The results can be found in \autoref{tab:performances}.
\par
The main observation to be made is that the proposed AdaProj loss outperforms all other losses.
Especially on the DCASE2023 dataset, there are significant improvements to be observed.
The most likely explanation is that for this dataset the classification task is less difficult and thus a few classes may be easily identified leading to embeddings that do not carry enough information to distinguish between embeddings belonging to normal and anomalous samples of these classes.
\par
Another interesting observation is that, in contrast to the original results presented in \cite{wilkinghoff2021sub}, the sub-cluster AdaCos loss actually performs slightly worse than the AdaCos loss despite having a higher solution space.
A possible explanation is that in \cite{wilkinghoff2021sub}, the centers are adapted during training whereas, in our work, they are not as this has been shown to improve the resulting performance \cite{wilkinghoff2023design}.
Since all centers have approximately the same distance to each other when being randomly initialized, i.e. the centers belonging to a target class and the other centers, the network will likely utilize only a single center for each class that is closest to the initial embeddings of the corresponding target class.
Moreover, a low inter-class similarity is more difficult to ensure due to the higher total number of  sub-clusters belonging to other classes.
This leads to more restrictive requirements when learning class-specific distributions and thus actually reduces the ability to differentiate between embeddings belonging to normal and anomalous samples.

\subsection{Investigating the impact of the subspace dimension on the performance}
\begin{figure}
    \centering
    \begin{adjustbox}{max width=\columnwidth}
          \begin{tikzpicture}
\begin{axis}[
	axis y line*=left,
    axis x line*=bottom,
    ymin=0.5,
    ymax=0.75,
    xmin=1,
    xmax=128,
    enlarge x limits=0.05,
    legend style={at={(0.5,1.05)},anchor=south,legend columns=2},
    ylabel=harmonic mean of \acp{auc}/\acp{pauc},
    xlabel=subspace dimension $\lvert\mathcal{C}_k\rvert$,
    height=7cm,
    width=16cm,
    xticklabel style={align=center},
    yticklabel style={align=center},
    typeset ticklabels with strut,
    xlabel near ticks,
    ylabel near ticks,
    %nodes near coords,  % uncomment to hide numbers in figure
    nodes near coords style={/pgf/number format/.cd,fixed zerofill,precision=2},
    ymajorgrids
]
%\addplot[teal!15!orange!85,mark=triangle,line width=2pt] coordinates
{(1,0.7133)(2,0.7133)(3,0.7108)(4,0.7083)(5,0.7093)(6,0.7063)(7,0.7141)(8,0.7196)(9,0.7182)(10,0.7144)(11,0.7187)(12,0.7107)(13,0.7077)(14,0.7073)};
%\addplot[teal!15!orange!45,mark=diamond,line width=2pt] coordinates {(1,0.5653)(2,0.5701)(3,0.5601)(4,0.5645)(5,0.5663)(6,0.5575)(7,0.5675)(8,0.5674)(9,0.5632)(10,0.5615)(11,0.5699)(12,0.5676)(13,0.5648)(14,0.5555)};
%\addplot[teal!85!blue!85,mark=square,line width=2pt] coordinates {(1,0.6896)(2,0.6886)(3,0.6874)(4,0.6952)(5,0.6909)(6,0.7025)(7,0.6918)(8,0.702)(9,0.6980)(10,0.6979)(11,0.7054)(12,0.7051)(13,0.6976)(14,0.6994)};
%\addplot[teal!85!blue!45,mark=pentagon,line width=2pt] coordinates {(1,0.586)(2,0.5889)(3,0.5938)(4,0.5927)(5,0.5912)(6,0.6004)(7,0.5933)(8,0.5951)(9,0.6006)(10,0.5997)(11,0.6109)(12,0.6026)(13,0.6027)(14,0.6014)};
\addplot[teal!15!orange!85,mark=triangle,line width=2pt] coordinates
{(1,0.7133)(2,0.7133)(3,0.7108)(4,0.7083)(6,0.7093)(8,0.7063)(12,0.7141)(16,0.7196)(20,0.7101)(24,0.7182)(28,0.7115)(32,0.7144)(40,0.7127)(48,0.7187)(56,0.7088)(64,0.7107)(80,0.7030)(96,0.7077)(128,0.7073)};
\addplot[teal!15!orange!45,mark=diamond,line width=2pt] coordinates {(1,0.5653)(2,0.5701)(3,0.5601)(4,0.5645)(6,0.5663)(8,0.5575)(12,0.5675)(16,0.5674)(20,0.5639)(24,0.5632)(28,0.5652)(32,0.5615)(40,0.5624)(48,0.5699)(56,0.5625)(64,0.5676)(80,0.5575)(96,0.5648)(128,0.5555)};
\addplot[teal!85!blue!85,mark=square,line width=2pt] coordinates {(1,0.6896)(2,0.6886)(3,0.6874)(4,0.6952)(6,0.6909)(8,0.7025)(12,0.6918)(16,0.702)(20,0.6949)(24,0.6980)(28,0.7070)(32,0.6979)(40,0.7007)(48,0.7054)(56,0.7028)(64,0.7051)(80,0.7044)(96,0.6976)(128,0.6994)};
\addplot[teal!85!blue!45,mark=pentagon,line width=2pt] coordinates {(1,0.586)(2,0.5889)(3,0.5938)(4,0.5927)(6,0.5912)(8,0.6004)(12,0.5933)(16,0.5951)(20,0.5922)(24,0.6006)(28,0.6030)(32,0.5997)(40,0.5973)(48,0.6109)(56,0.6042)(64,0.6026)(80,0.6041)(96,0.6027)(128,0.6014)};
\legend{\ac{auc} - development set, \ac{pauc} - development set, \ac{auc} - evaluation set, \ac{pauc} - evaluation set}
\end{axis}
\end{tikzpicture}
    \end{adjustbox}
    \caption{Domain-independent performance obtained on the DCASE2023 dataset with different subspace dimensions. The means over ten independent trials are shown.}
    \label{fig:comparison}
\end{figure}
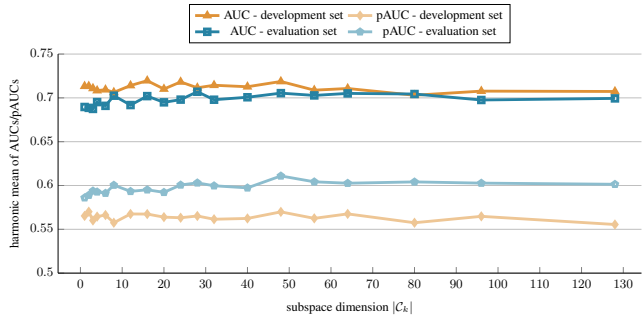
As an ablation study, different choices for the dimension of the subspaces have been compared experimentally on the DCASE2023 \ac{asd} dataset.
The results can be found in \autoref{fig:comparison}.
It can be seen, that, on the development set, the results are relatively stable while a larger dimension slightly improves the performance on the evaluation set without any significant differences.
For subspace dimensions greater than $48$ the performances seem to slightly degrade again.
In conclusion, the subspace dimension should be neither too high nor too low but other than that appears to not have a significant impact on the performance.
Thus, a dimension of $32$, as used for the other experiments in this work, appears to be a reasonable choice.
Since using the AdaProj with this subspace dimension also outperformed the other loss functions on the DCASE2022 \ac{asd} dataset (cf. \autoref{tab:performances}, this particular dimension may serve as a default hyperparameter setting for the AdaProj loss.

\section{Conclusions}
In this work, AdaProj a novel angular margin loss function specifically designed for semi-supervised anomaly detection with auxiliary classification tasks was presented.
It was proven that this loss function learns an embedding space with class-specific subspaces of arbitrary dimension.
In contrast to other angular margin losses, which try to project data to individual points in space, this relaxes the requirements of solving the classification task and allows for less compact distributions in the embedding space.
In experiments conducted on the DCASE2022 and DCASE2023 \ac{asd} datasets, it was shown that using AdaProj results in better performance than other commonly used loss functions.
In conclusion, the resulting embedding space has a more desirable structure than the other embedding spaces for differentiating between normal and anomalous samples.
%As a result, the proposed system outperforms all systems submitted to the DCASE2023 Challenge.
For future work, it is planned to evaluate AdaProj on other datasets and further improve the performance of the \ac{asd} system by utilizing \acl{ssl} \cite{wilkinghoff2024self} or multi-task learning \cite{venkatesh2022improved}.
Investigating how the AdaProj loss performs for supervised or unsupervised tasks in comparison to other loss functions may also be of interest.

\bibliography{refs}
\bibliographystyle{IEEEbib}
\end{sloppy}
\end{document}